\theoremstyle{plain}
\newtheorem{thm}{Theorem}[section]
\newtheorem{cor}[thm]{Corollary}
\newtheorem{lem}[thm]{Lemma}
\newcommand{\R}{\mathbb{R}}
\newcommand{\N}{\mathbb{N}}
\newcommand{\Sch}{\mathbb{S}}
\newcommand{\h}{\hbar}
\newcommand{\cU}{\mathcal{U}}
\newcommand{\cZ}{\mathcal{Z}}
\newcommand{\cL}{\mathcal{L}}
\newcommand{\e}{\epsilon}
\newcommand{\Op}{\mathrm{Op}_{\h}^w}
\newcommand{\supp}{\mathrm{supp \,}}
\newcommand{\vol}{\operatorname{Vol}}
\newcommand{\bra}{\left\langle}
\newcommand{\ket}{\right\rangle}
\newcommand{\diam}{\operatorname{diam}}
\newcommand{\Id}{\operatorname{Id}}
\title[Scarred quasimodes on translation surfaces]{Scarred quasimodes on \\ translation surfaces}
\date\today
\begin{document}

\begin{abstract}
Rational polygonal billiards are one of the key models among the larger class of pseudo-integrable billiards. Their billiard flow may be lifted to the geodesic flow on a translation surface. Whereas such classical billiards have been much studied in the literature, the analogous quantum billiards have received much less attention. 

This paper is concerned with a conjecture of Bogomolny and Schmit who proposed in 2004 that the eigenfunctions of the Laplacian on rational polygonal billiards ought to become localized along a finite number of vectors in momentum space, as the eigenvalue tends to infinity. 

For any given momentum vector $\xi_0\in\Sch^1$ we construct a continuous family of quasimodes which gives rise to a semi-classical measure whose projection on momentum space is supported on the orbit $D\xi_0$, where $D$ denotes the dihedral group associated with the rational polygon.
\end{abstract}

\author{Omer Friedland}
\address{Institut de Math\'ematiques de Jussieu - Paris Rive Gauche, Sorbonne Universit\'e - Campus Pierre et Marie Curie, 4 place Jussieu, 75252 Paris, France.}
\email{omer.friedland@imj-prg.fr}

\author{Henrik Uebersch\"ar}
\address{Institut de Math\'ematiques de Jussieu - Paris Rive Gauche, Sorbonne Universit\'e - Campus Pierre et Marie Curie, 4 place Jussieu, 75252 Paris, France.}
\email{henrik.ueberschar@imj-prg.fr}

\thanks{H. U. was supported by the grant ANR-17-CE40-0011-01 of the French National Research Agency ANR (project SpInQS)}

\maketitle

\section{Introduction}

Let $P\subset \R^2$ be a rational polygon, i.e. all its angles are rational multiples of $\pi$. Let us consider the motion of a point mass in the associated classical polygonal billiard. Due to the rationality of the angles, it is a simple fact that in momentum space the point mass can only assume a finite number of different directions (see \cite[Ch. 7]{T05}).

Therefore, from the point of view of semiclassical analysis, it is a natural question to ask, how the eigenfunctions of an associated quantum billiard are distributed in momentum space in the semiclassical limit, i.e. as the eigenvalue tends to infinity. Indeed, Bogomolny and Schmit \cite{BS04} conjectured that, in the semiclassical limit, the Laplace eigenfunctions on a rational polygon should become localized in momentum space along a finite number of momentum vectors -- a phenomenon that has been dubbed ``superscars''. 
A complementary question about the distribution of the eigenfunctions in the semiclassical limit has been answered by Marklof and Rudnick \cite{MR12} who proved that a full density subsequence of eigenfunctions equidistributes in position space.

It is well-known that the billiard flow on a rational polygon $P$ may be lifted to the geodesic flow on an associated translation surface $Q$, which is a flat surface with conical singularities (this is a classical construction, see e.g. \cite{MT02, T05, ZK75}).

Let us consider the Laplace-Beltrami operator, $\Delta=\partial_x^2+\partial_y^2$ in local Euclidean coordinates, on the flat surface $Q$. We are interested in Laplace eigenfunctions $\psi_\lambda$, with eigenvalue $\lambda>0$
$$
(\Delta+\lambda)\psi_\lambda=0. 
$$

Denote the set of conical singularities by $C$ and let $Q_0:=Q\setminus C$. The pseudo-differential calculus permits to associate with each $L^2$-normalized function $\psi$, a Wigner distribution $d\mu_\psi(x,\xi)$ on the unit co-tangent bundle $S^*Q_0$ which, for any classical observable $a\in C^\infty_c(S^*Q_0)$ is defined by the duality (see Section \ref{secL2} for a detailed explanation)
$$
\bra \Op(a)\psi,\psi\ket_{L^2(Q)} = \int_{S^*Q_0}a(x,\xi)d\mu_\psi(x,\xi)~, \quad \text{where } \|\psi\|_2=1,
$$
where $\Op(a)$ denotes the standard Weyl quantization.

We denote the restriction of the Wigner distribution to momentum space by 
$$
d\mu_\psi(\xi)=\int_Q d\mu_\psi(x,\xi). 
$$


Our goal is to construct approximate eigenfunctions for the Laplacian, known as quasimodes, and prove that they satisfy the above conjecture. 
%
Let $\Psi_\lambda\in C^2(Q)$ so that 
$$
\frac{\|(\Delta+\lambda)\Psi_\lambda\|_{L^2(Q)}}{\|\Psi_\lambda\|_{L^2(Q)}}=O(\lambda^\delta) .
$$
We say that $\Psi_\lambda$ is a quasimode with quasienergy $\lambda$ and spectral width $\lambda^\delta$ of the Laplacian on $Q$. 
%
We have the following theorem.

\begin{thm} \label{thm-Q}
Let $\xi_0\in\Sch$. Then for small enough $\e>0$ there exists a continuous family of quasimodes $\{\Psi_{\lambda}\}_{\lambda>0}$ for the Laplacian on $Q$ of spectral width $O(\lambda^{3/8+\e})$ so that 
$$
d\mu_{\Psi_{\lambda}}(\xi) \xrightarrow{w*} \delta(\xi-\xi_0)~,\quad \text{as } \lambda\to\infty.
$$
\end{thm}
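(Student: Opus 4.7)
My plan is to construct $\Psi_\lambda$ as a plane wave in direction $\xi_0$ cut off to a long thin tube in $Q_0$. In any simply-connected flat chart $U\subset Q_0$, the function
\[
\psi_\lambda(x) = e^{i\sqrt{\lambda}\bra x,\xi_0\ket}
\]
satisfies $(\Delta+\lambda)\psi_\lambda = 0$ identically, so all errors will come from the cut-off. The first step is to exhibit, for each large $\lambda$, an embedded rectangle $R_\lambda\subset Q_0$ of length $L=L(\lambda)$ and transverse width $W=W(\lambda)$ aligned with $\xi_0$ that develops into a single Euclidean chart. For $\xi_0$ not a saddle-connection direction the translation flow on $Q$ is minimal and uniquely ergodic by Kerckhoff--Masur--Smillie, and running the flow from a suitable base point up to the first return to a $W$-neighbourhood of its trajectory produces such an embedded tube for any $L,W$ with $LW\ll \vol(Q)$; saddle-connection directions are handled by working inside one cylinder of the induced cylinder decomposition.

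I then set $\Psi_\lambda = \chi\,\psi_\lambda$, extended by zero outside $R_\lambda$, with $\chi(x)=\chi_1(x_1)\chi_2(x_2)$ a smooth tensor cut-off adapted to $R_\lambda$ in coordinates $(x_1,x_2)$ in which $\xi_0 = e_1$. Since $\psi_\lambda$ is already an exact eigenfunction, a direct computation gives
\[
(\Delta+\lambda)\Psi_\lambda = (\chi_1''\chi_2 + \chi_1\chi_2'')\psi_\lambda + 2i\sqrt{\lambda}\,\chi_1'\chi_2\,\psi_\lambda,
\]
so only the longitudinal derivative of the cut-off couples to $\sqrt{\lambda}$. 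Taking $\chi_j$ with plateau of size $\asymp L, W$ and transitions on the same scale, routine $L^2$ bounds yield
\[
\frac{\|(\Delta+\lambda)\Psi_\lambda\|_{L^2(Q)}}{\|\Psi_\lambda\|_{L^2(Q)}} = O\!\left(\frac{\sqrt{\lambda}}{L}+\frac{1}{W^2}\right).
\]
Choosing $L = \lambda^{1/8-\e}$ and $W = \lambda^{-3/16}$, so that $LW = \lambda^{-1/16-\e}$ is compatible with the embedding constraint for large $\lambda$, both contributions are $O(\lambda^{3/8+\e})$, and continuity of $\lambda\mapsto\Psi_\lambda$ follows from the explicit formula once $R_\lambda$ has been chosen to vary continuously.

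For the momentum statement I would test $d\mu_{\Psi_\lambda}$ against an observable $a(x,\xi)=\tilde a(\xi)b(x)$ with $\tilde a\in C^\infty(\Sch^1)$ and $b\in C^\infty_c(Q_0)$ supported in the flat chart containing $R_\lambda$. Parseval, applied in that chart, rewrites $\bra\Op(a)\Psi_\lambda,\Psi_\lambda\ket$ up to symbolic remainders of size $O(\h)$ as an integral of $\tilde a(\eta/|\eta|)$ against $|\widehat{\chi\psi_\lambda}(\eta/\h)|^2$; since $\widehat{\chi\psi_\lambda}(\eta) = \hat\chi(\eta-\sqrt{\lambda}\xi_0)$ has Fourier width $O(\min(L,W)^{-1}) = O(\lambda^{3/16}) = o(\sqrt{\lambda})$, the renormalised angular distribution of this Fourier mass collapses onto $\xi_0$, giving $d\mu_{\Psi_\lambda}(\xi)\xrightarrow{w*}\delta(\xi-\xi_0)$. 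The main obstacle will be the geometric first step: producing $R_\lambda$ for \emph{every} $\xi_0\in\Sch^1$, not just for almost every direction, and doing so in a way that varies continuously in $\lambda$. The saddle-connection case is the delicate one, because the cylinder decomposition only admits discrete tube lengths and a single-valued plane wave on a closed cylinder would impose a quantisation on $\sqrt{\lambda}$ incompatible with the continuous family requirement; one must therefore either choose $R_\lambda$ as a sub-cylinder with $L$ held fixed and compensate by a sharper transverse estimate, or perturb $\xi_0$ and control the resulting dependence to recover the stated spectral width $\lambda^{3/8+\e}$ uniformly.
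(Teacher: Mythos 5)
Your cut-off plane wave is a genuinely different ansatz from the paper's, which instead time-averages a Gaussian coherent state $\psi_0$ of position/momentum spread $\h^{1/2-\e}$ by $\Psi_\lambda=\int H(t)\,e^{it(\Delta+\lambda)}\psi_0\,dt$ and reads off the spectral width from $(\Delta+\lambda)\Psi_\lambda=i\int H'(t)\,e^{it(\Delta+\lambda)}\psi_0\,dt$ after an Egorov-type reduction to the Euclidean cover of the cylinder. Your local computation is correct and, in fact, more elementary: the commutator-with-cutoff estimate $O(\sqrt{\lambda}/L + W^{-2})$ is exactly what the paper's analysis amounts to in the end, after the explicit Bessel asymptotics; both approaches live or die by the same phase-space constraint, namely that the state must fit inside an embedded tube of length $L\asymp\lambda^{1/8}$ and width comparable to the cylinder width $\asymp 1/L$.

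The gap is in the geometric step, and it is more serious than you acknowledge. Your claim that for any non-saddle-connection direction $\xi_0$ and any $L,W$ with $LW\ll\vol(Q)$ there is an embedded tube of length $L$ and width $W$ aligned with $\xi_0$ is false. Minimality (KMS or otherwise) gives no lower bound on the first return time of an orbit to a $W$-neighbourhood of itself: for directions of Liouville type the return time can be dramatically smaller than $1/W$, so for your prescribed $L=\lambda^{1/8-\e}$ the tube self-intersects long before it reaches length $L$, no matter how you shrink $W$, unless $W$ is taken super-polynomially small in $\lambda$ --- which destroys the $W^{-2}$ term in your spectral-width bound. Unique ergodicity does not help; what would help is a Diophantine condition on $\xi_0$, but the theorem is claimed for \emph{every} $\xi_0\in\Sch^1$. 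The paper avoids this entirely by never flowing in direction $\xi_0$: it uses the density of cylinder directions $\Sigma\subset\Sch^1$ and the quadratic counting of the length spectrum from Boshernitzan--Galperin--Kr\"uger--Troubetzkoy to choose a sequence $\xi_k\in\Sigma$, $\xi_k\to\xi_0$, with cylinder lengths $L_k\nearrow\infty$, and localizes the quasimode at $\xi_k$ for $\lambda$ in a corresponding interval $I_k=\,]cL_k^{8/(1-2\e)},cL_{k+1}^{8/(1-2\e)}]$; the weak-$*$ limit $\delta(\xi-\xi_0)$ still holds because $\xi_k\to\xi_0$. You sketch this remedy in your last sentence (``perturb $\xi_0$''), and it is the correct one, but it is not an optional refinement to handle saddle connections --- it is the essential mechanism that makes the theorem hold for \emph{all} $\xi_0$. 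Once you commit to cylinder directions, the single-valuedness obstruction you raise also disappears, because you only ever use a sub-rectangle of the cylinder, not the whole closed cylinder; so the quantization of $\sqrt{\lambda}$ is not an issue.

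Two smaller remarks. First, once one works in a cylinder of length $L_k$ and width $\asymp 1/L_k$, your estimate $O(\sqrt{\lambda}/L+W^{-2})$ with $W\asymp1/L$ is optimised at $L\asymp\lambda^{1/6}$, giving spectral width $O(\lambda^{1/3})$, which is in fact sharper than the $\lambda^{3/8+\e}$ claimed; your chosen scales $L=\lambda^{1/8-\e}$, $W=\lambda^{-3/16}$ are suboptimal but of course still suffice. Second, in the momentum computation you should make the scale of the Weyl quantization explicit: the paper uses $\h=\lambda^{-1/2}$, and your ``Parseval'' step is the paper's identity $d\mu_\psi(\xi)=\h^{-2}\|\psi\|_2^{-2}|\widehat\psi(\xi/\h)|^2\,d\xi$; with $\widehat{\chi\psi_\lambda}(\eta)=\widehat\chi(\eta-\sqrt{\lambda}\,\xi_0)$ the angular width is $O(W^{-1}/\sqrt{\lambda})\to0$, which is exactly what collapses $d\mu_{\Psi_\lambda}(\xi)$ onto $\delta(\xi-\xi_0)$ (onto $\delta(\xi-\xi_k)$ in the corrected version), as required.
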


As a corollary of the theorem we may construct a sequence of quasimodes for the Neumann Laplacian on a rational polygon $P$. Indeed, any rational polygon $P$ may be unfolded to a translation surface $Q$ under the action of the dihedral group $D$ of $P$ (see e.g. \cite[Section 1.5]{MT02}). Given a quasimode $\Psi_\lambda$ on $Q$, we may construct a quasimode on $P$ by the method of images, 
$$
\Psi_\lambda^P(x)=\sum_{g\in D}\Psi_\lambda(gx).
$$

\begin{cor} \label{cor-P}
Let $\xi_0\in\Sch$. Then for small enough $\e>0$ there exists a continuous family of quasimodes $\{\Psi_{\lambda}^P\}_{\lambda>0}$ for the Neumann Laplacian on $P$ of spectral width $O(\lambda^{3/8+\e})$ so that 
$$
d\mu_{\Psi_{\lambda}^P}(\xi) \xrightarrow{w*} \frac{1}{|D|}\sum_{g\in D}\delta(\xi-g\xi_0)~, \quad \text{as } \lambda\to\infty, 
$$
where $D$ is the dihedral group of $P$.
\end{cor}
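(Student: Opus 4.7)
The plan is to derive Corollary \ref{cor-P} directly from Theorem \ref{thm-Q} by carrying out the method of images, with careful bookkeeping of the symmetry group action on both the spectral quantities and the Wigner distribution.

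First I would fix $\xi_0\in\Sch$ and let $\Psi_\lambda$ be the quasimode on the translation surface $Q$ furnished by Theorem \ref{thm-Q}, and set $\Psi_\lambda^P(x)=\sum_{g\in D}\Psi_\lambda(gx)$. The Neumann boundary condition on $\partial P$ is automatic: the unfolding construction is designed so that each side of $P$ is a fixed-point set of some reflection $g\in D$, and the sum is by construction $D$-invariant, hence its normal derivative vanishes on every side of $P$. Moreover, since every $g\in D$ is an isometry of $Q$, the Laplacian commutes with the pullback $\Psi_\lambda\mapsto\Psi_\lambda\circ g$, which yields
$$
(\Delta+\lambda)\Psi_\lambda^P=\sum_{g\in D}\bigl[(\Delta+\lambda)\Psi_\lambda\bigr]\circ g,
$$
so the $L^2(P)$-norm of the left-hand side is bounded by $|D|\,\|(\Delta+\lambda)\Psi_\lambda\|_{L^2(Q)}^{1/2}$ times a suitable volume factor. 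The preservation of the $O(\lambda^{3/8+\e})$ spectral width therefore reduces to a lower bound on $\|\Psi_\lambda^P\|_{L^2(P)}$.

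The normalization and the momentum measure both hinge on the same cross-term analysis. A change of variables gives
$$
\|\Psi_\lambda^P\|_{L^2(P)}^2=\sum_{g,g'\in D}\int_P\Psi_\lambda(gx)\overline{\Psi_\lambda(g'x)}\,dx=\|\Psi_\lambda\|_{L^2(Q)}^2+\sum_{h\neq \Id}\bra\Psi_\lambda,\Psi_\lambda\circ h\ket_{L^2(Q)}.
$$
Since $\Psi_\lambda$ has its momentum mass concentrated on $\xi_0$ while $\Psi_\lambda\circ h$ has its momentum mass concentrated on $h\xi_0\neq \xi_0$ (for $h\neq \Id$ in the generic case where $D$ acts freely on the orbit of $\xi_0$), the cross-inner-products tend to zero by a standard argument: one tests against a symbol supported in a small neighbourhood of $\xi_0$, and uses the fact that $\Psi_\lambda\circ h$ is microlocalized away from this neighbourhood. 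In the non-generic case where the stabilizer of $\xi_0$ is nontrivial, the diagonal term picks up the appropriate combinatorial factor and the statement of the corollary remains correct because the delta measures on the right-hand side collapse accordingly.

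Finally, the momentum marginal follows from the transformation law of the Wigner distribution under isometries. Expanding
$$
\bra\Op(a)\Psi_\lambda^P,\Psi_\lambda^P\ket_{L^2(P)}=\sum_{g,g'\in D}\bra\Op(a)(\Psi_\lambda\circ g),\Psi_\lambda\circ g'\ket,
$$
the off-diagonal terms are negligible by the microlocalization argument above, while each diagonal term $\langle \Op(a)(\Psi_\lambda\circ g),\Psi_\lambda\circ g\rangle$ equals $\langle \Op(a\circ g^{-1})\Psi_\lambda,\Psi_\lambda\rangle$ up to lower-order terms, which, after taking the momentum marginal and invoking Theorem \ref{thm-Q}, converges to $a(g^{-1}\xi_0)=a(g\xi_0)$ after suitably orienting $D$ on $\Sch$. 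Averaging over the $|D|$ group elements produces the claimed weak-$*$ limit $\tfrac{1}{|D|}\sum_{g\in D}\delta(\xi-g\xi_0)$. The main obstacle is the asymptotic orthogonality of the translates $\Psi_\lambda\circ g$: one needs the construction in Theorem \ref{thm-Q} to yield a quasimode whose momentum concentration is sharp enough that these cross-terms vanish in the limit, which should follow from the explicit wave-packet-type nature of $\Psi_\lambda$ presumably used in the proof of Theorem \ref{thm-Q}.
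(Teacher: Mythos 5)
Your proposal is correct and follows essentially the same route as the paper: method of images, expansion of $\|\Psi_\lambda^P\|_{L^2(P)}^2$ and of $\bra\Op(a)\Psi_\lambda^P,\Psi_\lambda^P\ket$ into diagonal and cross terms, negligibility of the cross terms, commutation of $\Delta$ with the $D$-action, and the transformation law of the Weyl quantization under the isometries $g\in D$. You additionally spell out the phase-space--separation reason why the cross terms are $O(\h^\infty)$ (the paper only asserts this) and flag the non-generic case where $\xi_0$ has nontrivial stabilizer in $D$ (which the paper does not address), but these are refinements of, not departures from, the paper's argument.
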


\section{Construction of the quasimode}

The main idea of the construction of our quasimode is to take a coherent initial state which is localized in position and in momentum and average the evolved state over time. This construction is analogous, for example, to the one employed by Eswarathasan and Nonnenmacher \cite{EN} in the case of compact manifolds with a hyperbolic periodic orbit. In our case, since we are dealing with flat surfaces with conical singularities, we may average over much larger time scales. The length of the time interval is constrained by the condition that the evolved wave packet must remain inside embedded metric cylinders which correspond to parallel families of periodic orbits, and thus avoid the conical singularities.

Let us, first, introduce a coherent state on the Euclidean plane in order to motivate our construction. Let $(x_0,\xi_0)\in\R^2\times\Sch^1$, $\h=\lambda^{-1/2}$ and denote
$$
\varphi_0(x) = \sqrt{\frac{\pi}{\h}}\gamma(\frac{x-x_0}{\h^{1/2}})e^{i \frac{\xi_0\cdot x}{\h}}, 
$$
where $\gamma(x)=\frac{1}{2\pi}e^{-|x|^2/2}$. A simple calculation confirms that the state $\varphi_0$ is localized in position near $x_0$ on a scale $\h^{1/2}$ and in momentum near $\xi_0/\h$ on a scale $\h^{-1/2}$.

Let us now consider a translation surface $Q$ and take $x_0\in Q$. We choose a cutoff function $\chi\in C^\infty_c(\R_+)$ so that $\chi(x)=1$ for $x\le \tfrac{1}{2}$ and $\chi(x)=0$ for $x\ge 1$. For $\h$ small enough and $\e>0$, we may construct $\psi_0\in C^\infty(Q)$ so that
$$
\psi_0(x)=\chi(\frac{|x-x_0|}{\h^{1/2-\e}})\varphi_0(x)
$$
in local Euclidean coordinates.

In order to obtain a quasimode for the Laplacian we now average over time the evolved state $\cU_t\psi_0$, where $\cU_t=e^{it\Delta}$. Let $H\in C^\infty_c(\R)$ with $\supp H=[-T,T]$ and $T>0$ is a time-scale which depends on $\h$ and set
$$
\Psi_\lambda = \int_\R H(t)e^{i\lambda t}\cU_t\psi_0 dt = \int_\R H(t)e^{it(\Delta+\lambda)}\psi_0 dt.
$$

For simplicity we take $\widetilde{H}\in C^\infty_c(\R)$, with $\supp(\widetilde{H})\subset[-1,1]$, and set $H(\cdot)=\widetilde{H}(\cdot/T)$. Now observe that
$$
(\Delta+\lambda)\Psi_\lambda =\int_\R H(t)(\Delta+\lambda)e^{it(\Delta+\lambda)}\psi_0 dt = i \int_\R H'(t)e^{it(\Delta+\lambda)}\psi_0 dt.
$$

So, in order to prove that $\Psi_\lambda$ is a quasimode on $Q$ and to obtain a bound on its spectral width, we must obtain the asymptotics of the $L^2$-norm of 
$$
\Psi_\lambda(G)=\int_\R G(\frac{t}{T})e^{it(\Delta+\lambda)}\psi_0 dt, 
$$
where $G=\widetilde{H}$, $T^{-1}\widetilde{H}'$.


\section{Asymptotics of $L^2$-norms}\label{secL2}

Fix $G\in C^\infty_c(\R)$ with $\supp(G)\subset [-1,1]$. A change of variables yields
\begin{align}\label{L2id}
\|\Psi_\lambda(G)\|_{L^2(Q)}^2=\frac{1}{2}\int_\R g(v)e^{-iv\lambda} \bra \psi_0, \cU_v \psi_0 \ket dv
\end{align}
where 
$$
g(v)=\int_{\R}G(\frac{u+v}{2T})G(\frac{u-v}{2T})du.
$$

Note that, by construction, almost all of the mass (except for a proportion of order $\h^\infty$) of the Wigner distribution associated with the state $\psi_0$ lies inside the set 
$$
\Omega_0=B(x_0,\h^{1/2-\e}) \times B(\xi_0,\h^{1/2-\e}) \subset T^*Q.
$$

To see this, we recall the definition of the Weyl quantization on $\R^2$. Let $\psi\in\Sch(\R^2)$ and $a=a(x,\xi)\in C^\infty_c(\R^2\times\R^2)$. We define (see for instance \cite{Zw12})
$$
[\Op(a)\psi](x)=\frac{1}{(2\pi \h)^2}\int_{\R^2\times\R^2}e^{\frac{i}{\h}(x-y)\cdot\xi}a(\frac{x+y}{2},\xi)\psi(y)dy d\xi.
$$

If we choose an observable, which only depends on the position variable $a=a(x)$ we obtain the restriction of the Wigner distribution to position space $d\mu_{\psi}(x)=\|\psi\|_2^{-2}|\psi(x)|^2dx$. On the other hand, if we take $a=a(\xi)$ we find
$$
\frac{\bra\Op(a)\psi,\psi\ket}{\|\psi\|_2^2}=\frac{1}{\h^2\|\psi\|_2^2}\int_{\R^2}a(x)|\widehat{\psi}(\frac{\xi}{\h})|^2 d\xi, 
$$
where we denote $\widehat{\psi}(k)=\frac{1}{2\pi}\int_{\R^2}\psi(x)e^{-ik\cdot x}dx$.

It follows that the restriction of the Wigner distribution to momentum space is given by $d\mu_{\psi}(\xi)=\h^{-2}\|\psi\|_2^{-2}|\widehat{\psi}(\frac{\xi}{\h})|^2$. By construction, the state $\psi_0$ is localized near $\xi_0/\h$ on a scale of size $\h^{-1/2}$. This implies that $d\mu_{\psi_0}(\xi)$ must be localized near $\xi_0$ on a scale of size $\h^{1/2}$. Similarly, the restriction of the Wigner distribution to position space is localized near $x_0$ on a scale $\h^{1/2}$, which implies that almost all of its mass is concentrated inside the set $\Omega_0$.

We may choose a classical observable $a_0\in C^\infty_c (S^*Q_0)$ with $\supp a_0\subset \Omega_0$ so that 
$$
\int_{S^*Q_0} (1-a_0)^2 d\mu_{\psi_0}=O(\h^\infty)
$$
which implies 
\begin{align*}
& |\bra \psi_0, \cU_v (\Id-\Op(a_0))\psi_0 \ket_{L^2(Q)}| \leq \|(\Id-\Op(a_0))\psi_0\|_{L^2(Q)} \\
&=\bra\psi_0,(\Id-\Op(a_0))^2\psi_0\ket^{1/2} =(\int_{S^*Q_0}(1-a_0)^2d\mu_{\psi_0})^{1/2}\ = O(\h^\infty) .
\end{align*}

In view of the decomposition 
$$\psi_0=\Op(a_0)\psi_0+(\Id-\Op(a_0))\psi_0$$
we obtain, from a substitution in \eqref{L2id},
\begin{align} \label{integral-overlap}
\nonumber & \| \Psi_\lambda(G)\|_{L^2(Q)}^2 \\
& =\frac{1}{2}\int_\R g(v)e^{-iv\lambda} \bra \psi_0, \cU_v \Op(a_0)\psi_0 \ket_{L^2(Q)} dv+O(\h^\infty).
\end{align}

If we now choose an initial direction $\xi_0$ corresponding to an embedded metric cylinder $\cZ_{\xi_0}$ (a family of parallel periodic orbits) of length $L$ and width $\asymp 1/L$, 
then for 
$$
|v|\le T=O(\h^{3/4+\e/2})
$$
we have that $\phi_{v/\h}\Omega_0 \subset \cZ_{\xi_0}$, where $\phi_t$ denotes the geodesic flow on $Q$. 

Indeed, this follows from the observation that 
$$
\diam(\pi_Q(\phi_{v/\h}\Omega_0))\asymp \frac{|v|}{\h}\diam(\pi_{\Sch^1}\Omega_0)=|v|\h^{-1/2-\epsilon}
$$
where $\pi_X$ denotes the canonical projection on $X$. In order to remain inside the cylinder we must have $\diam(\pi_Q(\phi_{v/\h}\Omega_0))$ is smaller than the width, that is, $|v|\h^{-1/2-\epsilon}=O(\frac{1}{L})$. Moreover, we impose the condition $\frac{|v|}{\h}\leq L$ which ensures that our wave packet may not travel further than the length of the periodic cylinder. This then implies $|v|\h^{-1/2-\epsilon}=O(\frac{\h}{|v|})$ and thus $|v|=O(\h^{3/4+\e/2})$. In particular, the set $\bigcup_{|v|\le T} \phi_{v/\h}\Omega_0$ does not self-intersect on $Q$.

We may lift the metric cylinder $\cZ_{\xi_0}$ to the Euclidean plane by embedding it in a Euclidean cylinder $\widetilde{\cZ}_{\xi_0}\subset \bigsqcup_i Q_i\subset \R^2$ which lies inside a union of disjoint translates $Q_i=\tau_i Q$ where $\tau_i$ is the corresponding translation vector (see \cite{MT02}).
 
We may, therefore, apply the exact version of Egorov's theorem for the Weyl quantization on the Euclidean plane (cf. \cite[Ch. 4]{Ma02}), since $\supp (a_0 \circ \phi_{-v/\h})\subset \phi_{v/\h}\Omega_0$ 
$$
\cU_v\Op(a_0)\psi_0=\Op(a_0\circ \phi_{-v/\h})\cU_v \psi_0.
$$

This means that the propagation of the cutoff state $\Op(a_0)\psi_0$ may simply be lifted to the cover $\bigsqcup_i Q_i$ and then projected back to $Q$. Since the set $\{\phi_{v/\h}\Omega_0\}_{|v|\le T}$ has no self-intersections we find that the overlap $\bra \psi_0, \cU_v \Op(a_0)\psi_0 \ket$, i.e. the integrand in \eqref{integral-overlap}, vanishes outside an interval $|v|\leq\h$. In other words, $\supp (\psi_0)$ and $\supp(a_0\circ \phi_{-v/\h})$ may trivially overlap, by construction, only around $v=0$. So the integral in \eqref{integral-overlap} collapses to an integral over an interval of length $\leq \h$. Up to an error of size $O(\h^\infty)$ this integral may be replaced with the corresponding Euclidean integral with the Gaussian initial state $\varphi_0$
\begin{align} \label{Eucl_L2}
\nonumber \|\Psi_\lambda(G)\|_{L^2(Q)}^2=&\frac{1}{2}\int_{|v|\le Ch} g(v)e^{-iv\lambda} \bra \psi_0, \cU_v \Op(a_0)\psi_0 \ket_{L^2(Q)} dv + O(\h^\infty)\\
=&\frac{1}{2}\int_{\R} g(v)e^{-iv\lambda} \bra \varphi_0, \cU_v \varphi_0 \ket_{L^2(\R^2)} dv + O(\h^\infty). 
\end{align}

The asymptotic of the Gaussian integral may now be computed explicitly in terms of special functions. This yields the following lemma.

\begin{lem}
We have $\|\Psi_\lambda(G)\|_{L^2(Q)}^2 \sim CT\h^{3/2}$, as $\h\to 0$.
\end{lem}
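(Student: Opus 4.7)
The strategy is to reduce equation~\eqref{Eucl_L2} to a completely explicit one-dimensional Gaussian integral. First I would compute the autocorrelation $\bra \varphi_0,\cU_v\varphi_0\ket_{L^2(\R^2)}$ in Fourier variables. A direct Gaussian computation gives
$$\widehat{\varphi_0}(k) = \sqrt{\pi\h}\,e^{i(\xi_0/\h-k)\cdot x_0}\,e^{-\h|k-\xi_0/\h|^2/2},$$
and since $\cU_v$ acts as multiplication by $e^{-iv|k|^2}$, Parseval combined with the substitution $k = \xi_0/\h+\eta$ (and using $|\xi_0|=1$, $\lambda = \h^{-2}$) reduces the overlap to a standard 2D Gaussian integral which evaluates to
$$\bra \varphi_0,\cU_v\varphi_0\ket = \frac{\h}{4(\h+iv)}\,e^{-iv\lambda}\exp\!\left(-\frac{v^2}{\h^2(\h+iv)}\right).$$
The factor $e^{-iv\lambda}$ is the on-shell phase, which cancels exactly against the phase $e^{\mp iv\lambda}$ of \eqref{Eucl_L2} (this cancellation is forced by the fact that $\varphi_0$ is a coherent state concentrated on the classical energy shell $|k|^2 = \lambda$). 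One is left with
$$\|\Psi_\lambda(G)\|_{L^2(Q)}^2 = \frac{\h}{8}\int_\R \frac{g(v)}{\h+iv}\exp\!\left(-\frac{v^2}{\h^2(\h+iv)}\right)dv + O(\h^\infty).$$

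The natural scale in $v$ is dictated by the real part of the exponent, $\mathrm{Re}\,(v^2/(\h^2(\h+iv))) = v^2/(\h(\h^2+v^2))$, which reaches order $1$ precisely at $|v|\asymp \h^{3/2}$. I would therefore rescale $v = \h^{3/2}w$, obtaining
$$\frac{\h^{3/2}}{8}\int_\R \frac{g(\h^{3/2}w)}{1+i\h^{1/2}w}\exp\!\left(-\frac{w^2}{1+i\h^{1/2}w}\right)dw.$$
A case split on $|w|\le \h^{-1/2}$ (where the real part of the exponent is bounded below by $w^2/2$) versus $|w|\ge \h^{-1/2}$ (where the exponent is exponentially small in $\h^{-1}$) provides an $\h$-uniform integrable majorant of the form $Ce^{-w^2/2}$, so dominated convergence applies.

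Since $T\gg \h^{3/2}$ for any $\e > 0$, a mean-value estimate yields $g(\h^{3/2}w) = g(0) + O(\h^{3/2}|w|)$ uniformly on the essential support, where $g(0) = 2T\|G\|_{L^2(\R)}^2$ after the substitution $u = 2Tr$. Pulling out $g(0)$ and passing to the limit,
$$\|\Psi_\lambda(G)\|_{L^2(Q)}^2 \sim \frac{g(0)}{8}\,\h^{3/2}\int_\R e^{-w^2}dw = \frac{\sqrt{\pi}\,\|G\|_{L^2(\R)}^2}{4}\,T\h^{3/2},$$
which is the claimed asymptotic with $C = \sqrt{\pi}\|G\|_{L^2(\R)}^2/4$.

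Nothing here is conceptually difficult; the main obstacle is careful bookkeeping. One has to justify dominated convergence for the rescaled complex-Gaussian kernel and verify that replacing $g(v)$ by $g(0)$ contributes only a lower-order error. Both requirements hinge on the gap $T\gg \h^{3/2}$, which is precisely what the time-window choice $T = O(\h^{3/4+\e/2})$ made in the construction of $\Psi_\lambda$ provides.
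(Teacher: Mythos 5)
Your proof is correct, and it takes a genuinely different route from the paper's. The paper rescales $g(v)=T\widetilde g(v/T)$, Taylor-expands $\widetilde g$ at $0$, and then evaluates the resulting moment integrals in terms of the modified-Bessel-type quantities $\int_0^{2\pi}\cos(n\theta)e^{-x(1-\cos\theta)}\,d\theta = 2\pi e^{-x}I_n(x)$ evaluated at $x=2/\h$, using the large-argument asymptotic $\sim\sqrt{2\pi/x}$. You instead compute the Gaussian autocorrelation $\bra\varphi_0,\cU_v\varphi_0\ket$ in closed form as an explicit complex Gaussian, observe that after the on-shell phase cancellation the essential support of the $v$-integral is $|v|\asymp\h^{3/2}$, rescale $v=\h^{3/2}w$, and pass to the limit by dominated convergence. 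Your approach bypasses the Bessel-function bookkeeping entirely, makes the origin of the $\h^{3/2}$ scaling and the hypothesis $T\gg\h^{3/2}$ completely transparent, and yields the explicit constant $C=\sqrt{\pi}\,\|G\|_{L^2}^2/4$, which agrees with the constant implicit in the paper (there $\widetilde g(0)=2\|G\|_{L^2}^2$ and the leading term is $\frac{1}{2}\cdot\frac{\h T}{4}\widetilde g(0)\sqrt{\pi\h}$). Two small remarks: your overlap formula is the complex conjugate of what one gets with the paper's conventions (the on-shell phase comes out as $e^{+iv\lambda}$, not $e^{-iv\lambda}$); since $g$ is real and even and the integral equals a squared $L^2$-norm, this does not affect the result, but the sign should be fixed for the cancellation claim to read cleanly. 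Also, for the replacement of $g(\h^{3/2}w)$ by $g(0)$ it is slightly cleaner to use that $g$ is even, hence $g'(0)=0$, so $g(\h^{3/2}w)-g(0)=O(\h^3 w^2\|g''\|_\infty)$ with $\|g''\|_\infty=O(T^{-1})$, giving a relative error $O(\h^3 w^2/T^2)\to0$ on the essential support $|w|\lesssim\h^{-1/2}$; the crude linear bound you quote also works but is not sharp.
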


\begin{proof}
Let us introduce the Gaussian quasimode
$$
\Phi_\lambda(G)=\int_\R G(\frac{t}{T})e^{it(\Delta+\lambda)}\varphi_0 dt.
$$

In view of \eqref{Eucl_L2} we have $\|\Psi_\lambda(G)\|_{L^2(Q)}=\|\Phi_\lambda(G)\|_{L^2(\R^2)}+O(\h^\infty)$, and note that a change of variable yields $g(v)=T\widetilde{g}(v/T)$ where 
$$
\widetilde{g}(w)=\int_\R G(\frac{u+w}{w})G(\frac{u-w}{2})du
$$
and $\widetilde{g}\in C^\infty_c(\R)$ with $\supp\widetilde{g}\subset[-2,2]$.

If we now develop $\widetilde{g}$ into a Taylor series at $w=0$ we find, for some $c\in(-2,2)$,
\begin{align*}
\int_{\R} & g(v)e^{-iv\lambda} \bra \varphi_0, \cU_v \varphi_0 \ket dv \\
& = \frac{\h T}{4}\left(\widetilde{g}(0)J_0(\frac{2}{\h})+\frac{1}{2}\widetilde{g}''(c)(\frac{\h}{T})^2 J_2(\frac{2}{\h})\right)+O(\h^\infty) \sim C h^{3/2}T
\end{align*}
where $J_0(x)=\int_0^{2\pi} e^{-x(1-\cos\theta)}d\theta$ and 
$$
J_2(x)=\int_0^{2\pi} ((1-\cos\theta)^2-\frac{1}{2}\h\cos\theta)e^{-x(1-\cos\theta)}d\theta.
$$

The result then follows from the asymptotic 
$$
\int_0^{2\pi}\cos(n\theta)e^{-x(1-\cos\theta)}d\theta = 2\pi e^{-x}I_n(x)\sim \sqrt{\frac{2\pi}{x}}~, \quad x\to\infty,
$$
where $I_n(x)$ denotes the modified Bessel function of the first kind.
\end{proof}

If we now apply this asymptotic to the cases $G=\widetilde{H}$, $T^{-1}\widetilde{H}'$, we find
$$
\|(\Delta+\lambda)\Psi_\lambda\|_{L^2(Q)}^2 \sim C_0 \frac{\h^{3/2}}{T}~, \quad
\|\Psi_\lambda\|_{L^2(Q)}^2 \sim C_1 h^{3/2}T.
$$
Recall $T=O(\h^{3/4+\e/2})$. Thus, the maximal time we may choose is of order $T\asymp \h^{3/4+\e/2}$. We conclude
$$
\frac{\|(\Delta+\lambda)\Psi_\lambda\|_{L^2(Q)}}{\|\Psi_\lambda\|_{L^2(Q)}}=O(\frac{1}{T})=O(\h^{-3/4-\e/2})=O(\lambda^{3/8+\e/4}).
$$

Note that the condition $\frac{T}{\h}\leq L$ together with $T\asymp \h^{3/4+\e/2}$ implies that $\h^{-1/4+\e/2}=\lambda^{1/8-\e/4}=O(L)$, i.e. $\lambda=O(L^{8/(1-2\e)})$. So given a metric cylinder of length $L$ we may construct a continuous family of quasimodes with quasienergy $\lambda\in(0,cL^{8/(1-2\epsilon)}]$ and spectral width $O(\lambda^{3/8+\e/4})$.

\section{Semi-classical measures}

We may now construct a family of quasimodes $\{\Psi_\lambda\}_{\lambda>0}$ as follows. We denote by $\Sigma\subset\Sch^1$ the set of rational momentum vectors associated with the set of periodic metric cylinders in $Q$. Moreover, we denote by $\cL$ the length spectrum, i.e. the set of lengths of periodic metric cylinders in $Q$. We have (cf. \cite{BGKT98}) that $\Sigma$ is dense in $\Sch^1$ and
$$
\#\{l\in\cL \mid l\leq T\}\asymp T^2.
$$

In view of the counting asymptotic for the length spectrum above, given any $\xi_0\in\Sch^1$, there exists a sequence $(\xi_k)_{k\in\N}\subset\Sigma$, $\xi_k\rightarrow\xi_0$, so that the sequence of lengths of these cylinders $(L_k)_{k\in\N}$ is an increasing sequence and $L_k\nearrow +\infty$. In such a way we obtain a sequence of intervals $I_k:=]cL_k^{8/(1-2\epsilon)},cL_{k+1}^{8/(1-2\epsilon)}]$, which cover the half-line $\R_+$, such that $\Psi_\lambda$, $\lambda\in I_k$, is a quasimode with spectral width $O(\lambda^{3/8+\epsilon})$. Note that the quasimodes $\Psi_\lambda$ with $\lambda\in I_k$ are associated with a momentum vector $\xi_k$ and as $\lambda\to\infty$ (i.e. $k\to\infty$) we have $\xi_k\to\xi_0$.

We recall the formula for the restriction of the Wigner distribution to momentum space
$$
d\mu_{\Phi_\lambda}(\xi)=\h^{-2}\|\Phi_\lambda\|_2^{-2}\left|\widehat{\Phi_\lambda}(\frac{\xi}{\h})\right|^2 d\xi.
$$

A calculation confirms 
$$
\widehat{\Phi_\lambda}(\frac{\xi}{\h})=\h\int_\R H(t)e^{\frac{it}{\h^2}(1-|\xi|^2)}\widehat{\gamma}(\frac{\xi-\xi_k}{\h^{1/2}})dt
$$
which implies that almost all of the mass of the measure $d\mu_{\Phi_\lambda}(\xi)$ is concentrated in a ball $B(\xi_k,\h^{1/2-\e})$, because of the rapid decay of $\widehat{\gamma}$. It follows that 
$$
\lim_{\h\to 0}\frac{\bra\Op(a)\Phi_\lambda,\Phi_\lambda\ket_{L^2(\R^2)}}{\|\Phi_\lambda\|_2^2}=a(\xi_0) .
$$

Indeed, it follows from the construction of the inital state $\psi_0$, which is simply a cutoff of the Gaussian state $\varphi_0$ on a ball of size $\h^{1/2-\e}$, and almost all of whose mass, up to a proportion of $O(\h^\infty)$, is microlocally supported on the set $B(x_0,\h^{1/2-\e})\times B(\xi_k,\h^{1/2-\e})$, that
$$
\bra\Op(a)\Psi_\lambda,\Psi_\lambda\ket_{L^2(Q)}=\bra\Op(a)\Phi_\lambda,\Phi_\lambda\ket_{L^2(\R^2)}+O(\h^\infty)
$$
and therefore 
$$
\lim_{\h\to 0}\frac{\bra\Op(a)\Psi_\lambda,\Psi_\lambda\ket_{L^2(Q)}}{\|\Psi_\lambda\|_2^2}=a(\xi_0)
$$
which implies, as $\lambda\to\infty$,
$$
d\mu_{\Psi_\lambda}(\xi)\xrightarrow{w*}\delta(\xi-\xi_0) .
$$

This concludes the proof of Theorem \ref{thm-Q}.

\section{Quasimodes on rational polygons}

In order to verify that $\Psi_\lambda^P$ is indeed a quasimode for the Laplacian on $P$, we calculate its $L^2$-norm
\begin{align*}
& \|\sum_{g\in D}\Psi_\lambda(gx)\|_{L^2(P)}^2 = \sum_{g,g'\in D} \int_P\Psi_\lambda(gx)\overline{\Psi_\lambda(g'x)}dx\\
& = \sum_{g\in D}\int_P |\Psi_\lambda(gx)|^2 dx +\sum_{g\neq g'} \int_P\Psi_\lambda(gx)\overline{\Psi_\lambda(g'x)}dx \\
& = \int_Q |\Psi_\lambda(x)|^2 dx +O(\h^\infty).
\end{align*}

Moreover, since $\Delta(\Psi_\lambda\circ g)=(\Delta\Psi_\lambda)\circ g$, we find
$$
\|(\Delta+\lambda)\Psi_\lambda^P\|_{L^2(P)}=\|(\Delta+\lambda)\Psi_\lambda\|_{L^2(Q)}+O(\h^\infty)
$$
which implies that $\Psi_\lambda^P$ is a quasimode of the Laplacian on $P$ with spectral width $O(\lambda^{3/8+\e})$.

Let us now calculate the restriction of the semi-classical measures to momentum space, as $\lambda\to\infty$,
\begin{align*}
& \frac{\bra\Op(a)\Psi_\lambda^P,\Psi_\lambda^P\ket_{L^2(P)}}{\|\Psi_\lambda\|_{L^2(P)}^2}
= \sum_{g,g'\in D}\frac{\bra\Op(a)\Psi_\lambda(gx),\Psi_\lambda(g'x)\ket_{L^2(P)}}{\|\Psi_\lambda\|_{L^2(P)}^2}\\
= & \sum_{g\in D} \frac{\bra\Op(a)\Psi_\lambda(gx),\Psi_\lambda(gx)\ket_{L^2(P)}}{\|\Psi_\lambda\|_{L^2(Q)}^2}+O(\h^\infty)\\
= & \frac{1}{\|\Psi_\lambda\|_{L^2(Q)}^2}\sum_{g\in D}\int_{S^*P}a(\xi)d\mu_{\Psi_\lambda}(gx,g\xi)+O(\h^\infty)\\
= & \frac{\vol(P)}{\vol(Q)}\sum_{g\in D}\int_{\Sch^1}a(\xi)d\mu_{\Psi_\lambda}(g\xi)+O(\h^\infty)\\
= & \frac{1}{|D|}\sum_{g\in D}\int_{\Sch^1}a(\xi)d\mu_{\Psi_\lambda}(g\xi)+O(\h^\infty) \to \frac{1}{|D|}\sum_{g\in D}a(g\xi_0) ,
\end{align*}
which yields Corollary \ref{cor-P}.

\end{document}